\documentclass[letterpaper, 10 pt, conference]{ieeeconf}  

\IEEEoverridecommandlockouts                              
\usepackage{cite}
\usepackage{amsmath,amssymb}
\usepackage{algorithm,algorithmic}
\usepackage{hyperref}
\usepackage{textcomp}
\usepackage{optidef}
\usepackage{tikz}
\usetikzlibrary{calc}
\usetikzlibrary{positioning}
\usetikzlibrary{arrows.meta}

\newtheorem{prb}{Problem}
\newtheorem{prp}{Proposition}
\newtheorem{asm}{Assumption}
\newtheorem{dfn}{Definition}
\newtheorem{thm}{Theorem}
\newtheorem{lmm}{Lemma}
\newtheorem{rem}{Remark}

\title{\LARGE \bf
Data Poisoning Attacks on Informativity for Observability: Invariance-Based Synthesis 
}

\author{Iori Takaki, Ahmet Cetinkaya, and Hideaki Ishii
\thanks{This work was supported in part by JSPS under Grant-in-Aid for
Scientific Research Grant Nos. 24K00844 and 23K03913, and by JST
under ASPIRE Grant No. JPMJAP2402.}
\thanks{I. Takaki and H. Ishii are with the Department of Information Physics and Computing, The University of Tokyo, Tokyo 113-8656, Japan. (email:
        {\tt\small ioritakaki@g.ecc.u-tokyo.ac.jp, hideaki\_ishii@ipc.i.u-tokyo.ac.jp})}%
\thanks{A. Cetinkaya is with the Department of Functional Control Systems, Shibaura Institute of
Technology, Tokyo 135-8548, Japan. (email:
        {\tt\small ahmet@shibaura-it.ac.jp})}%
}

\begin{document}

\maketitle
\thispagestyle{empty}
\pagestyle{empty}

\begin{abstract}
This paper studies cyber attacks against informativity-based analysis in data-driven control. Focusing on strong observability, we consider an adversary who post-processes finite time-series data by an invertible linear transformation acting on the data matrices. We show that such transformations are capable of embedding malicious states into the invariant subspace explained by the transformed dataset. We provide a constructive attack method and derive feasibility conditions that characterize when such transformations exist. Moreover, we formulate an optimization problem to obtain the minimum-norm attack that quantifies the smallest data distortion required to destroy informativity. Numerical examples demonstrate that small and structured transformations can invalidate informativity certificates. 
\end{abstract}

\section{Introduction}
Data-driven control has rapidly evolved as a general framework that directly leverages measured data to analyze theoretical properties of a system, such as stability, observability, and controllability, and to synthesize controllers without committing to a single identified model \cite{DePersis}. Within this framework, data informativity provides a key theoretical foundation \cite{Informativity}. It characterizes whether the available data are sufficient to solve a control or analysis task, by reasoning uniformly over the model set consistent with the data. Many analysis results are expressed via rank-type conditions on data matrices, while synthesis problems can often be cast as robust design over the model set and then be reduced to linear matrix inequalities (LMIs) feasibility problems \cite{Informativity, Mishra}. Under independent random noise in the data, such rank conditions are often hard to break intentionally, so informativity-based certification may appear robust. Furthermore, applications to networked control systems can be found, including data informativity required for distributed control system design \cite{DCS}, cooperative control in multi-agent systems \cite{MAS}, and quadratic stabilization using quantized data \cite{Necsys25}. 

However, in cyber-physical environments, the presence of cyberattacks mimicking process/measurement noise is also conceivable. Previous research has discussed data manipulation aimed at system identification \cite{Russo} or closed-loop destabilization \cite{Sasahara}. In particular, when adversaries can execute data poisoning, sophisticated structured manipulations may inject undesirable, and potentially malicious dynamic behavior into models explained by the data. This not only invalidates rank-based analysis but may also destroy stabilization conditions required for LMI-based synthesis, rendering the design problem infeasible. Other related research includes data poisoning against data-driven parameter tuning \cite{Russo Proutiere}, identification of attack-free sensors using data \cite{Anand}, reconstruction of original data from falsified dataset \cite{Yan}, and the resilience of systems against data poisoning \cite{Shinohara}.

In the data-informativity framework, a related but distinct perspective comes from the \emph{data transformation} viewpoint \cite{TanakaKanekoKyb}. Motivated by the role of equivalent transformations in model-based control, the recent work \cite{TanakaKanekoKyb} formalizes how transforming measured time-series data induces a corresponding transformation of the compatible model set. While existing techniques are primarily developed as tools to simplify analysis or preserve specific properties of the systems, they also suggest an attack surface: an adversary may intentionally apply an \emph{invertible} transformation that preserves superficial data richness (e.g., rank/persistency of excitation) yet changes the induced model set in a harmful direction.

Motivated by this observation, we study \emph{malicious data transformations} against informativity-based analysis, with a focus on strong observability. Concretely, we consider an attacker who post-processes the pre-training dataset by an invertible block linear map acting separately on each of the data matrices, thereby avoiding trivial rank-reducing manipulations and enhancing stealthiness. The attacker's goal is to render the transformed data not informative for strong observability. This is carried out by embedding a nonzero state into the data-driven weakly unobservable subspace associated with the transformed dataset \cite{Eising}. Such an attack results in weakly unobservable models being introduced into the models explained by the data, making it impossible to analyze the properties of the true model from the transformed dataset. Furthermore, many data informativity-based design problems based on common Lyapunov functions (especially for observer design \cite{Mishra}) become impossible.

Our contributions are threefold. First, building on the invariance/subspace characterization of strong observability in affine model sets \cite{EisingECC, Eising}, we derive a constructive transformation-based attack that injects a designated weakly unobservable eigenpair into the post-attack model set and enlarges the maximum weakly unobservable coefficient space. As a result, the resulting dataset still appears informative (i.e., the transformation is nonsingular and preserves rank), but the informativity test against strong observability will fail. Second, we provide equivalent geometric and rank-type feasibility conditions under which such invertible transformations exist. These conditions explain when the above attack is possible purely from the structure of the data matrices. They also lead to simple, data-checkable tests that tell whether a given dataset is vulnerable to transformation-based tampering. Third, we develop a minimum-norm attack variant that quantifies the smallest data perturbation required to destroy informativity, offering a principled vulnerability metric for informativity-based certification \cite{Eising}. We cast this as an optimization problem and propose an efficient alternating algorithm to compute a near-minimal perturbation. The resulting value provides a clear \emph{distance-to-non-informativity}, which can be used as a quantitative robustness measure for data-driven certification.

The rest of the paper is organized as follows. In Section~II, we introduce the problem setting and formulate malicious data-transformation attacks. Section~III presents the main theoretical results, including feasibility conditions and a constructive design of invertible block-diagonal transformations. In Section~IV, we develop a minimum-norm attack formulation and provide an algorithm for computing stealthy transformations. Numerical examples are reported in Section~V to illustrate the effectiveness and interpretability of the proposed attacks. Finally, we conclude the paper in Section~VI. Due to space limitations, several technical results are stated without proofs.

\section{Problem formulation}
\subsection{Preliminaries}
First, we will present the problem setting and results from prior works. The framework of this study is based on the model representation using affine sets employed in \cite{Eising}. 

Consider the following linear time-invariant system:
\begin{equation}
\label{plant}
\begin{split}
x_{k+1}&=A_{\text{true}}x_k+Bu_k+Ew_k,\\
y_k&=Cx_k+Du_k+Fw_k,\\
\end{split}
\end{equation}
where $x_k\in\mathbb{R}^n$ is the state, $u_k\in\mathbb{R}^m$ is the input, $y_k\in\mathbb{R}^p$ is the output, and $w_k\in\mathbb{R}^l$ is the noise added during data collection.
Here, we assume that $A_{\text{true}}$ is unknown, but all other matrices $B$, $C$, $D$, $E$, and $F$ are known. Suppose that we have the following data matrices from this system:
\begin{equation}
\label{data}
\begin{split}
&X_-:=\begin{bmatrix} x_0 & x_1 & \cdots & x_{T-1}\end{bmatrix},\\
&X_+:=\begin{bmatrix} x_1 & x_2 & \cdots & x_T\end{bmatrix},\\
&U_-:=\begin{bmatrix} u_0 & u_1 & \cdots & u_{T-1}\end{bmatrix},\\
&Y_-:=\begin{bmatrix} y_0 & y_1 & \cdots & y_{T-1}\end{bmatrix},
\end{split}
\end{equation}
where $T$ is the data collection period. Let $\mathbb{D}:=\mathbb{R}^{n\times T}\times\mathbb{R}^{n\times T}\times\mathbb{R}^{m\times T}\times\mathbb{R}^{p\times T}$ denote the entire set of data, and let $\mathcal{D}:=\begin{bmatrix}
    X_-^\mathsf{T}&X_+^\mathsf{T} & U_-^\mathsf{T} & Y_-^\mathsf{T}
\end{bmatrix}^\mathsf{T}\in\mathbb D$ denote the matrix formed by stacking all data in (\ref{data}). We assume that the attacker can access these data. Additionally, although $w_k$ containing the process noise and the measurement noise is unknown, we similarly express it as the following matrix:
\begin{equation}
\label{noise}
W_-:=\begin{bmatrix} w_0 & w_1 & \cdots & w_{T-1}\end{bmatrix}.
\end{equation}
For further analysis, a model representation with noise removed is desirable. To this end, we adopt the following assumption based on \cite{Eising}.
\begin{asm}
\label{assumption A_dat}
There exist $M\in\mathbb{R}^{l\times n}$ and $N\in\mathbb{R}^{l\times p}$ such that
\begin{equation}
\mathrm{ker}\begin{bmatrix}M & N\end{bmatrix}=\mathrm{im}\begin{bmatrix}E\\F\end{bmatrix}.
\end{equation}
\end{asm}
\vspace{1mm}

Here, we define the matrices $P$, $Q$, and $R$ as
\begin{equation}
\label{affine parameters}
\begin{split}
&P(\mathcal{D}):=X_-,\;Q:=M,\;\\&R(\mathcal{D}):=\begin{bmatrix}
    M & N
\end{bmatrix}\begin{bmatrix}
    X_+-BU_-\\Y_--CX_--DU_-
\end{bmatrix}.
\end{split}
\end{equation}
Using these matrices, we express the set of models for the matrix $A$ after noise cancellation as the following affine set:
\begin{equation}
\label{affine set}
\Sigma (\mathcal D)=\{A\in\mathbb{R}^{n\times n}|\;R(\mathcal D)=QAP(\mathcal D)\}.
\end{equation}

Previous studies have discussed system-theoretic properties such as strong controllability and strong observability for affine sets of this form. In particular, when data are given, conditions are provided for any model $A$ contained within the affine set described by that data to possess desired properties, which are termed data informativity. 

In this paper, we focus particularly on observability.  To that end, we provide the following fundamental definition of strong observability. Here, $w_k$ in (\ref{plant}) represents noise that corrupts the data and is unrelated to the system's inherent properties; therefore, when defining observability, we set $w_k\equiv0$. Furthermore, for the system (\ref{plant}) without noise, let $y(k,x_0,u)$ denote the system output when the input $u$ is applied with the initial state $x_0$.
\begin{dfn}
The system (\ref{plant}) is called strongly observable if for each $x_0\in\mathbb{R}^n$ and input sequence $u$, it holds that $y(k, x_0, u) = 0$ for all $k\in\mathbb{Z}_+$ implies that $x_0 = 0$.
\end{dfn}

When there are no input terms in (\ref{plant}), this definition coinsides with the usual observability. Furthermore, we introduce the invariant subspace corresponding to the usual unobservable subspace.
\begin{dfn}
Consider the system (\ref{plant}). We define the subspace $\mathcal V^*(A,B,C,D)\subseteq \mathbb{R}^n$ called the weakly unobservable subspace as 
\begin{equation}
\label{weakly unobservable subspace}
\begin{aligned}[b]
\mathcal V^*&(A,B,C,D)\\:=&\{x_0\in\mathbb{R}^n|\;\exists u \;\mathrm{s.t.}\;y(k,x_0,u)=0,\;k\in\mathbb{Z}_+\}.
\end{aligned}
\end{equation}
\end{dfn}
\vspace{1mm}

Clearly, the system (\ref{plant}) being strongly observable is equivalent to $\mathcal V^*(A,B,C,D)=\{0\}$.

We now introduce a geometric approach based on
invariant subspaces from \cite{Eising}. As we will see, 
this becomes fundamental 
for our study on data-driven attacks 
targeting invariant subspaces. We say a subspace $\mathcal V\subseteq \mathbb{R}^n$ is output-nulling controlled invariant if
\begin{equation}
\label{output-nulling controlled invariant}
\begin{bmatrix}
    A\\C
\end{bmatrix}\mathcal V\subseteq \mathcal V\times\{0\}+\mathrm{im}\begin{bmatrix}
    B \\ D
\end{bmatrix}.
\end{equation}
It is known that the largest subspace satisfying this inclusion relation coincides with the weakly unobservable subspace $\mathcal V^*(A,B,C,D)$ defined in (\ref{weakly unobservable subspace}) (see \cite{Trentelman}, Theorem 7.10). That is, $\mathcal V^*(A,B,C,D)$ includes all $\mathcal V$ satisfying (\ref{output-nulling controlled invariant}). 

We can transform the relation in (\ref{output-nulling controlled invariant}) based on a data-driven approach. 
As preparation, we interpret the inclusion on the coefficient side: we introduce a subspace $\mathcal J(\mathcal D)\subseteq \mathbb R^{T}$ whose elements represent linear combination coefficients that can generate weakly unobservable state trajectories from the data $P(\mathcal D) = X_-$. This expression is based on a behavioral perspective, such as Willems' fundamental lemma, which states that a trajectory of the system can be generated by a linear combination of trajectory data \cite{Willems, Multiple datasets}. Therefore, we refer to $\mathbb{R}^T$ as the \emph{coefficient space} and $\mathcal J(\mathcal D)$ as the weakly unobservable coefficient space described by the data $\mathcal D$. 

Although the parameter $A$ in (\ref{output-nulling controlled invariant}) is currently unknown, using the fact that it is an element of the affine model set $\Sigma (\mathcal D)$, we can derive the data-driven representation
\begin{equation}
\label{data driven coefficient space}
\begin{bmatrix}
    R(\mathcal D)\\CP(\mathcal D)
\end{bmatrix}\mathcal J(\mathcal D)\subseteq QP(\mathcal D)\mathcal J(\mathcal D)\times\{0\}+\mathrm{im}\begin{bmatrix}
    QB \\ D
\end{bmatrix}.
\end{equation}
Furthermore, we define the following subspaces.
\begin{dfn}
\label{maximum weakly unobservable coefficient space}
We call the largest subspace $\mathcal J^*(\mathcal D)\subseteq\mathbb{R}^T$ satisfying the inclusion (\ref{data driven coefficient space}) the maximum weakly unobservable coefficient space.
\end{dfn}
\begin{dfn}
Given data $\mathcal D$ and $\mathcal J^*(\mathcal D)$, we refer to the subspace $\mathcal V^*(\mathcal D):=P(\mathcal D)\mathcal J^*(\mathcal D)\subseteq\mathbb{R}^n$ as the data-driven weakly unobservable subspace.
\end{dfn}

In the spirit of the general data-informativity framework \cite{Informativity}, we may also express informativity for strong observability in an abstract set-inclusion form. We define the entire set of strongly observable models as
\begin{equation*}
\Sigma_{\mathrm{SO}} := \{A\in\mathbb{R}^{n\times n}\mid (A,B,C,D)\ \text{is strongly observable}\}.
\end{equation*}
Then, we define primitive informativity as follows.
\begin{dfn}
The data $\mathcal D$ are said to be informative for strong observability if $\Sigma(\mathcal D)\subseteq \Sigma_{\mathrm{SO}}$.
\end{dfn}

The following result characterizes informativity using the geometric condition in (\ref{data driven coefficient space}).
\begin{lmm}[\hspace{-0.01pt}\cite{Eising}, Theorem 22]
\label{informativity under geometric condition}
The data $\mathcal D$ are informative for strong observability if and only if $C^{-1}\mathrm{im}\,D\subseteq\mathrm{im}\,P(\mathcal D)$ and $\mathcal J^*(\mathcal D)\subseteq \mathrm{ker}\,P(\mathcal D)$.
\end{lmm}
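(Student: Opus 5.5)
We prove both directions by relating the data-driven inclusion (\ref{data driven coefficient space}) to the model-based inclusion (\ref{output-nulling controlled invariant}) for individual models $A\in\Sigma(\mathcal D)$, and by using the freedom in $\Sigma(\mathcal D)$, which is $\Sigma(\mathcal D)=A_0+\{\Delta\mid Q\Delta P(\mathcal D)=0\}$ for any $A_0\in\Sigma(\mathcal D)$ (nonempty since $A_{\text{true}}\in\Sigma(\mathcal D)$ under Assumption~\ref{assumption A_dat}). Throughout, write $P=P(\mathcal D)$ and $R=R(\mathcal D)$. Note that $\ker Q=\ker M$ need not be trivial. From Assumption~\ref{assumption A_dat}, however, $\ker[M\ N]=\mathrm{im}[E^\mathsf{T}\ F^\mathsf{T}]^\mathsf{T}$. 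Hence $Qv=0$ forces $(v,0)\in\ker[M\ N]$, so $(v,0)$ lies in the image of $[E^\mathsf{T}\ F^\mathsf{T}]^\mathsf{T}$. This is the structural fact we will use to pass between $QA$-statements and $A$-statements.

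\emph{Sufficiency.} Assume $C^{-1}\mathrm{im}\,D\subseteq\mathrm{im}\,P$ and $\mathcal J^*\subseteq\ker P$. Take $A\in\Sigma(\mathcal D)$ and let $\mathcal V=\mathcal V^*(A,B,C,D)$. The goal is $\mathcal V=\{0\}$. First, from (\ref{output-nulling controlled invariant}), every $x\in\mathcal V$ satisfies $Cx\in\mathrm{im}\,D$, so $\mathcal V\subseteq C^{-1}\mathrm{im}\,D\subseteq\mathrm{im}\,P$. Next, define $\mathcal J:=P^{-1}\mathcal V\subseteq\mathbb R^T$, so that $P\mathcal J=\mathcal V$. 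For $\alpha\in\mathcal J$, the inclusion (\ref{output-nulling controlled invariant}) gives $APa\in\mathcal V+B u$ and $CP\alpha=-Du$ for some $u$. Multiplying the first relation by $Q$ and using $QAP=R$ yields $R\alpha\in QP\mathcal J+QBu$. Hence $\mathcal J$ satisfies (\ref{data driven coefficient space}), and therefore $\mathcal J\subseteq\mathcal J^*\subseteq\ker P$. Consequently $\mathcal V=P\mathcal J=\{0\}$.

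\emph{Necessity.} We argue by contraposition and construct a model $A\in\Sigma(\mathcal D)$ with a nonzero weakly unobservable subspace. There are two cases. (i) Suppose $\mathcal J^*\not\subseteq\ker P$. Set $\mathcal V:=P\mathcal J^*\neq\{0\}$. Choose a basis of $\mathcal J^*$ modulo $\mathcal J^*\cap\ker P$; its images form a basis $v_i=P\alpha_i$ of $\mathcal V$. From (\ref{data driven coefficient space}) we obtain $R\alpha_i=QP\beta_i+QBu_i$ and $CP\alpha_i=-Du_i$ with $\beta_i\in\mathcal J^*$. We want $A\in\Sigma(\mathcal D)$ with $Av_i=P\beta_i+Bu_i$. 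Starting from $A_0\in\Sigma(\mathcal D)$, we have $Q(A_0v_i-P\beta_i-Bu_i)=0$, so the discrepancy lies in $\ker Q$. The correction $\Delta$ must satisfy $Q\Delta P=0$ while $\Delta v_i$ equals the prescribed vectors in $\ker Q$. The choice $\Delta:=Z\,S$ does this, where $Z$ has columns in $\ker Q$ and $S$ is any linear map with $S v_i=e_i$. Indeed $Q\Delta=QZS=0$ since $QZ=0$, so $Q\Delta P=0$ trivially. With this $A$, $\mathcal V$ is output-nulling controlled invariant and nonzero, so $A\notin\Sigma_{\mathrm{SO}}$. (ii) Suppose $C^{-1}\mathrm{im}\,D\not\subseteq\mathrm{im}\,P$. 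Pick $x\in C^{-1}\mathrm{im}\,D\setminus\mathrm{im}\,P$, say $Cx=-Du$. Because $x\notin\mathrm{im}\,P$, we may choose $\Delta$ with $\Delta P=0$ and $\Delta x$ arbitrary. Set $A:=A_0+\Delta$ with $Ax=\lambda x+Bu$ for some scalar $\lambda$. Then $\mathrm{span}\{x\}$ is output-nulling controlled invariant, and again $A\notin\Sigma_{\mathrm{SO}}$.

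The main obstacle is case (i), namely making the correction consistent with $\Sigma(\mathcal D)$. Because $\Delta$ acts on vectors $v_i\in\mathrm{im}\,P$, the constraint $Q\Delta P=0$ forces $\Delta v_i\in\ker Q$ rather than giving free choice of $\Delta v_i$. I expect the construction above, which factors $\Delta$ through $\ker Q$, to handle this. One must still check that the required discrepancies genuinely lie in $\ker Q$, and that the $\beta_i$ chosen for different $\alpha_i$ are compatible, i.e.\ that the assignment $v_i\mapsto P\beta_i+Bu_i$ is well defined on $\mathcal V$. This is the step I would verify most carefully, possibly replacing the $\beta_i$ by a linear selection on $\mathcal J^*$.
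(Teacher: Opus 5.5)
The paper gives no proof of this lemma; it is imported from Eising and Trentelman (Theorem 22), so there is no in-paper argument to compare yours against. Your proof is correct and self-contained. Sufficiency goes through $\mathcal J:=P^{-1}\mathcal V^*(A,B,C,D)$ together with the maximality of $\mathcal J^*$. Here the hypothesis $C^{-1}\mathrm{im}\,D\subseteq\mathrm{im}\,P$ is exactly what gives $P\mathcal J=\mathcal V^*$. Necessity explicitly builds a non-strongly-observable $A\in\Sigma(\mathcal D)$ in each of the two failure cases.

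The points you leave open in case (i) are already settled by what you wrote.
\begin{itemize}
\item \emph{Discrepancies.} The vectors $d_i:=A_0v_i-P\beta_i-Bu_i$ lie in $\ker Q$, since $Qd_i=R\alpha_i-QP\beta_i-QBu_i=0$.
\item \emph{Well-definedness.} Because $v_1,\dots,v_k$ is a basis of $\mathcal V$, a map $\Delta$ with $\Delta v_i=-d_i$ and $\mathrm{im}\,\Delta\subseteq\mathrm{span}\{d_i\}\subseteq\ker Q$ exists outright. For $x=\sum_i c_iv_i$, linearity gives $Ax=P\bigl(\sum_i c_i\beta_i\bigr)+B\bigl(\sum_i c_iu_i\bigr)$ with $\sum_i c_i\beta_i\in\mathcal J^*$, and $Cx=D\bigl(\sum_i c_iu_i\bigr)$. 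So no linear selection of the $\beta_i$ is needed.
\end{itemize}

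\textbf{Role of Assumption 1.} The structural consequence for $\ker Q$ that you announce is never used; the construction works for any $Q$. The assumption matters only through $A_{\text{true}}\in\Sigma(\mathcal D)$. That nonemptiness is genuinely needed for necessity, because an empty $\Sigma(\mathcal D)$ is vacuously informative while the two conditions can fail.

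\textbf{Sign convention.} With $+\,\mathrm{im}\,[B^\mathsf{T}\ D^\mathsf{T}]^\mathsf{T}$ on the right of the invariance inclusion, the correct pairing is $Ax\in\mathcal V+Bu$ together with $Cx=Du$, not $Cx=-Du$. As written, case (ii) with $Cx=-Du$ and $Ax=\lambda x+Bu$ does not make $\mathrm{span}\{x\}$ invariant unless $Bu\in\mathrm{span}\{x\}$. Flipping the sign of the $D$-term consistently in all three places, which is possible since $\Delta x$ is free, fixes this.
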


\subsection{Problem formulation}
In this paper, we consider a problem from the perspective of an attacker whose goal is to process the data $\mathcal D$ so that the modified data $\tilde{\mathcal D}$ becomes not informative for strong observability. Also, \emph{stealthy} attacks are important, since the transformation should not destroy the original information in the data (e.g., rank and the original coefficient space $\mathcal J^*(\mathcal D)$). More specifically, the class of attacks that we consider is represented by the linear transformation $\Phi_a\in\mathbb R^{(2n+m+p)\times(2n+m+p)}$ given by 
\begin{equation*}
\Phi_a:=\textrm{blk-diag}(\Phi_-^X,\Phi_+^X,\Phi_-^U,\Phi_-^Y),
\end{equation*}
which will be applied to the original data as follows:
\begin{equation}
\label{data transformation}
\tilde{\mathcal D}:=\Phi_a\mathcal D.
\end{equation}
As a consequence, the attacked data are given by
\begin{equation*}
\begin{split}
\tilde{X}_-&:=\Phi_-^X X_-,\quad
\tilde{X}_+:=\Phi_+^X X_+,\\
\tilde{U}_-&:=\Phi_-^U U_-,\quad
\tilde{Y}_-:=\Phi_-^Y Y_-.
\end{split}
\end{equation*}
We impose as part of the design requirement that the linear transformation $\Phi_a$ used for the attack is nonsingular. This is because attacks that reduce the rank of the original data would be trivial, and they will be immediately detected by the system operator when inputs with persistency of excitation (PE) properties are applied \cite{DePersis PE}.

In view of Lemma \ref{informativity under geometric condition}, the attacked data $\tilde{\mathcal D}$ is not informative for strong observability if $\mathcal J^*(\tilde{\mathcal D})\nsubseteq\mathrm{ker}\,P(\tilde{\mathcal D})$. This can be written as
\begin{equation*}
\exists v\in\mathcal J^*(\tilde{\mathcal D}),\;v\neq0\;\mathrm{s.t.} \;P(\tilde{\mathcal D})v\neq0
\end{equation*}
as an equivalent condition. This means that there exists a coefficient vector $v\in \mathcal J^*(\tilde{\mathcal D})$ that generates a nonzero state vector through the linear map $P(\tilde{\mathcal D})$. In particular, letting $\tilde{x}_0:=P(\tilde{\mathcal D})v$, we have $\tilde{x}_0\in\mathcal V^*(\tilde{\mathcal D})=P(\tilde{\mathcal D})\mathcal J^*(\tilde{\mathcal D})$ and $\tilde{x}_0\neq0$. Therefore, the attacker's goal is to embed such a state into the data-driven weakly unobservable subspace after data transformation. 

We also consider the problem of minimizing the data perturbation. This is done not only to reduce detectability via statistical tests as shown in \cite{Russo}, but also to characterize the minimal perturbation required to destroy the informativity.

Based on the above discussion, the problem we aim to solve can be stated as follows.
\begin{prb}
\label{problem 1}
For the system in (\ref{plant}) and the given original data $\mathcal D$, under Assumption \ref{assumption A_dat}, design an invertible linear map $\Phi_a:\mathbb D\to\mathbb D$ as in (\ref{data transformation}) so that $\tilde{x}_0\in\mathcal V^*(\tilde{\mathcal D})$ holds. Furthermore, find the transformation $\Phi_a^*$ among those satisfying the above conditions that minimizes $\|\Phi_a^*\mathcal D-\mathcal D\|_\mathrm{F}$, where $\|\cdot\|_\mathrm{F}$ denotes the Frobenius norm.
\end{prb}

\section{Attack design}
In this section, we provide an algorithm for designing a specific transformation that solves Problem \ref{problem 1}. In particular, the malicious state $\tilde x_0$ is treated as fixed throughout this section, and the attacker designs an invertible linear map $\Phi_a$ so that the resulting data become not informative for strong observability by injecting the mode $(\tilde \lambda,\tilde x_0)$. To formalize the attacker’s capability, we make the following assumption.
\begin{asm}
\label{assumption of attacker}
The attacker has a virtual model $A_{\mathrm{mal}}\in\mathbb{R}^{n\times n}$ and its eigenpair $(\tilde{\lambda},\tilde{x}_0)$ such that the system $(A_{\mathrm{mal}}, B,C,D)$ is weakly unobservable and $A_{\mathrm{mal}}\tilde{x}_0=\tilde{\lambda} \tilde{x}_0$.
\end{asm}

\subsection{Construction of malicious trajectories}
Under Assumption \ref{assumption of attacker}, we consider embedding the malicious state $\tilde{x}_0$ into $\mathcal V^*(\tilde{\mathcal D})$ based on the geometric condition (\ref{data driven coefficient space}). Notice here that $\mathcal J^*(\mathcal D)$ is the coefficient space that can generate $x_0=P(\mathcal D)\eta$ with $\eta\in\mathcal J^*(\mathcal D)$, which results in $x_0\in\mathcal V^*(\mathcal D)$. Thus, for the attack design, it is sufficient to find  $v\in\mathcal J^*(\tilde{\mathcal D})$ satisfying $\tilde{x}_0=P(\tilde{\mathcal D})v$ from the perspective of the coefficient space. 

For the attacked data $\tilde{\mathcal D}$, the geometric condition in (\ref{data driven coefficient space}) is written as
\begin{equation}
\label{data driven attacked coefficient space}
\begin{bmatrix}
    R(\tilde{\mathcal D})\\CP(\tilde{\mathcal D})
\end{bmatrix}\mathcal J^*(\tilde{\mathcal D})\subseteq QP(\tilde{\mathcal D})\mathcal J^*(\tilde{\mathcal D})\times\{0\}+\mathrm{im}\begin{bmatrix}
    QB \\ D
\end{bmatrix}.
\end{equation}
On the other hand, since $A_{\mathrm{mal}}$ should be included in the affine set after the attack, $R(\tilde{\mathcal D})=QA_{\mathrm{mal}}P(\tilde{\mathcal D})$ must hold. Thus, the attack succeeds if
\begin{equation*}
\exists v\in\mathcal J^*(\tilde{\mathcal D}),\;v\neq0\;\text{s.t.}\;R(\tilde{\mathcal D})v=QA_{\mathrm{mal}}P(\tilde{\mathcal D})v
\end{equation*}
as an element of the left-hand side set of (\ref{data driven attacked coefficient space}). Therefore, for the inclusion in (\ref{data driven attacked coefficient space}) to hold, it is necessary and sufficient that there exist vectors $v\neq0$ and $\beta$ satisfying
\begin{equation}
\label{equation of attack design}
\begin{bmatrix}
R(\tilde{\mathcal D})\\CP(\tilde{\mathcal D})
\end{bmatrix}v
=
\begin{bmatrix}
QA_{\mathrm{mal}}P(\tilde{\mathcal D})
\\0
\end{bmatrix}v
+
\begin{bmatrix}
QB\\
D
\end{bmatrix}\beta.
\end{equation}
Expanding (\ref{equation of attack design}) yields
\begin{equation}
\label{separated equation}
\begin{split}
R(\tilde{\mathcal D})v&=QA_{\mathrm{mal}}P(\tilde{\mathcal D})v+QB\beta,\\
CP(\tilde{\mathcal D})v&=D\beta.
\end{split}
\end{equation}
By (\ref{affine parameters}) and substituting the transformed data into (\ref{separated equation}), we obtain
\begin{equation}
\label{expanded equation}
\begin{split}
M(\tilde{X}_+v-B\tilde{U}_-v)+&N(\tilde{Y}_-v-C\tilde{X}_-v-D\tilde{U}_-v)\\&=M(A_{\mathrm{mal}}\tilde{X}_-v+B\beta),\\
C\tilde{X}_-v&=D\beta.
\end{split}
\end{equation}

We now introduce the target vectors that the attacker must find: First, from (\ref{affine parameters}), we have $\tilde{x}_0 = P(\tilde{\mathcal D})v = \tilde{X}_-v$, which is the first target vector. There are three more as follows:
\begin{equation}
\label{relation between attacked data and targets}
\tilde{x}_1:=\tilde{X}_+v,\;\tilde{u}_0:=\tilde{U}_-v,\;\tilde{y}_0:=\tilde{Y}_-v.
\end{equation}
By substituting them into (\ref{expanded equation}), we obtain
\begin{equation}
\label{expanded equation 2}
\begin{split}
M(\tilde{x}_1-B\tilde{u}_0)+&N(\tilde{y}_0-C\tilde{x}_0-D\tilde{u}_0)\\&=M(A_{\mathrm{mal}}\tilde{x}_0+B\beta),\\
C\tilde{x}_0&=D\beta.
\end{split}
\end{equation}

In this study, to simplify the problem, we consider the special case with $\beta=0$\footnote{If $\beta\neq0$ is allowed, the constraint $\tilde{x}_0\in C^{-1}\mathrm{im}\,D$ is imposed on the initial state chosen by the attacker, which complicates the attack design. On the other hand, setting $\beta=0$ simplifies the problem greatly: $C\tilde{x}_0=0$, meaning $\tilde{x}_0$ can just be taken from the kernel of $C$.}. As one solution satisfying the above equations, by comparing the coefficients of $M$ and $N$ on both sides of (\ref{expanded equation 2}), we obtain the following conditions to be satisfied by the target vectors:
\begin{equation}
\label{design vectors}
\left\{
\begin{split}
\tilde{x}_0&=\Phi_-^XX_-v,\\
\tilde{u}_0&=\Phi_-^UU_-v,\\
\tilde{x}_1&=\Phi_+^XX_+v=\tilde{\lambda}\tilde{x}_0+B\tilde{u}_0,\\
\tilde{y}_0&=\Phi_-^YY_-v=D\tilde{u}_0,\\
C\tilde{x}_0&=0,
\end{split}
\right.
\end{equation}
where we applied $A_{\mathrm{mal}}\tilde{x}_0=\tilde{\lambda}\tilde{x}_0$. From (\ref{design vectors}), the target vectors can be interpreted as one-step virtual trajectories obtained from the malicious model $(A_{\mathrm{mal}}, B, C, D)$ specified by the attacker. Here, from (\ref{design vectors}), we see that the attack design procedure requires the initial input $\tilde{u}_0$ in addition to the eigenpair $(\tilde{\lambda},\tilde{x}_0)$ to be specified first.

\subsection{Attack designs and analysis of their feasibility}
We proceed to consider the problem of finding a linear map $\Phi_Z$ for each data $Z\in\{X_-,X_+,U_-,Y_-\}$ that generates the target vectors in (\ref{design vectors}). Specifically, we design each block $\Phi_Z$ so that it maps the data-generated vector $Z v$ to the prescribed target $z_{\mathrm{tar}}$ in (\ref{design vectors}), while leaving the original weakly unobservable coefficient directions $\mathcal J^*(\mathcal D)$ unchanged. The following lemma provides a parameterization of such a map $\Phi_Z$ via a vector $\xi_Z$.
\begin{lmm}
\label{attack design}
Suppose that the data $Z\in\mathbb{R}^{m_Z\times T}$, the vector $v\in\mathbb{R}^T\setminus\{0\}$, and the target vector $z_\mathrm{tar}\in\mathbb{R}^{m_Z}$ are given. For any $\xi_Z\in\mathbb{R}^{m_Z}$ such that $\xi_Z^\mathsf{T}Zv=1$ and $\xi_Z^\mathsf{T}z_\mathrm{tar}\neq0$, let $\Phi_Z:=I_{m_Z}+(z_\mathrm{tar}-Zv)\xi_Z^\mathsf{T}$. Then, the following statements hold:
\begin{enumerate}
\item $\Phi_Z$ is nonsingular.

\item $\Phi_ZZv=z_\mathrm{tar}$.

\item For any $w\in\mathrm{ker}\,(\xi_Z^\mathsf{T}Z)$, it holds $\Phi_ZZw=Zw$.
\end{enumerate}
\end{lmm}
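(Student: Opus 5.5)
The plan is to treat $\Phi_Z$ as a rank-one update of the identity and verify the three claims by direct computation, using only the two normalizations $\xi_Z^\mathsf{T}Zv=1$ and $\xi_Z^\mathsf{T}z_\mathrm{tar}\neq0$. We set $a:=z_\mathrm{tar}-Zv$ and $b:=\xi_Z$, so that $\Phi_Z=I_{m_Z}+ab^\mathsf{T}$.

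For statement 1), we invoke the matrix determinant lemma, $\det(I+ab^\mathsf{T})=1+b^\mathsf{T}a$. Here
\begin{equation*}
1+b^\mathsf{T}a=1+\xi_Z^\mathsf{T}z_\mathrm{tar}-\xi_Z^\mathsf{T}Zv=\xi_Z^\mathsf{T}z_\mathrm{tar},
\end{equation*}
which is nonzero by assumption, so $\Phi_Z$ is nonsingular. If we prefer to avoid citing the determinant lemma, we can argue directly. Suppose $\Phi_Z x=0$. Then $x=-a(b^\mathsf{T}x)$, and applying $b^\mathsf{T}$ gives $(1+b^\mathsf{T}a)\,b^\mathsf{T}x=0$. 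Since $1+b^\mathsf{T}a\neq0$, this forces $b^\mathsf{T}x=0$, and therefore $x=0$. Sherman--Morrison also yields the inverse explicitly as $I-ab^\mathsf{T}/(\xi_Z^\mathsf{T}z_\mathrm{tar})$, which we may record for later use.

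For statement 2), we compute
\begin{equation*}
\Phi_Z Zv=Zv+(z_\mathrm{tar}-Zv)\,\xi_Z^\mathsf{T}Zv=Zv+z_\mathrm{tar}-Zv=z_\mathrm{tar},
\end{equation*}
using $\xi_Z^\mathsf{T}Zv=1$.

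For statement 3), take $w\in\ker(\xi_Z^\mathsf{T}Z)$. Then $\Phi_Z Zw=Zw+(z_\mathrm{tar}-Zv)(\xi_Z^\mathsf{T}Zw)=Zw$, since the scalar $\xi_Z^\mathsf{T}Zw$ vanishes.

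There is no real obstacle. The only step needing any care is nonsingularity, and it hinges on observing that the normalization $\xi_Z^\mathsf{T}Zv=1$ collapses $1+\xi_Z^\mathsf{T}(z_\mathrm{tar}-Zv)$ exactly to $\xi_Z^\mathsf{T}z_\mathrm{tar}$. We also note that the existence of an admissible $\xi_Z$, which requires $Zv\neq0$ and some compatibility between $Zv$ and $z_\mathrm{tar}$, is not part of this lemma. That existence question is presumably what the subsequent feasibility conditions address.
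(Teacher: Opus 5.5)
Your proof is correct: the matrix determinant lemma, or equivalently your direct kernel argument, reduces $\det\Phi_Z$ to $1+\xi_Z^\mathsf{T}(z_\mathrm{tar}-Zv)=\xi_Z^\mathsf{T}z_\mathrm{tar}\neq0$, and statements (ii) and (iii) follow by substituting $\xi_Z^\mathsf{T}Zv=1$ and $\xi_Z^\mathsf{T}Zw=0$ into the rank-one update. The paper states this lemma without proof, citing space limitations, and your direct verification is the routine argument it implicitly relies on.
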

\vspace{1mm}

This lemma suggests us how to construct maps to transform the data.
By (ii), the map $\Phi_Z$ generates attacks in the
$v$ direction, while (iii) ensures that the data appears
unchanged in the original coefficient direction. The
latter property enhances stealthiness because it makes it impossible to distinguish between the original data and the attacked data in the $w$ direction. 

The attack design procedure is shown in Algorithm \ref{alg:attack}.
This algorithm leads us to the nonsingular transformation $\Phi_a=\textrm{blk-diag}(\Phi_-^X,\Phi_+^X,\Phi_-^U,\Phi_-^Y)$ with
\begin{equation}
\label{transformations}
\begin{split}
&\Phi_-^X:=I_n+(\tilde{x}_0-X_-v)\xi_{X_-}^{\mathsf{T}},\\
&\Phi_+^X:=I_n+(\tilde{x}_1-X_+v)\xi_{X_+}^{\mathsf{T}},\\
&\Phi_-^U:=I_m+(\tilde{u}_0-U_-v)\xi_{U_-}^{\mathsf{T}},\\
&\Phi_-^Y:=I_p+(\tilde{y}_0-Y_-v)\xi_{Y_-}^{\mathsf{T}}.\\
\end{split}
\end{equation}

We hence arrive at the following approach for the attacker to design a specific map: First, choose the attack coefficient direction $v$ not belonging to $\mathcal J^*(\mathcal D)$. Second, take $\tilde{x}_0$ and $\tilde{u}_0$, and use $A_{\mathrm{mal}}$, $B$, and $D$ to compute the target vectors $\tilde{x}_1=A_{\mathrm{mal}}\tilde{x}_0+B\tilde{u}_0=\tilde{\lambda}\tilde{x}_0+B\tilde{u}_0$ and $\tilde{y}_0=D\tilde{u}_0$. Third, select the corresponding $\xi_Z$ for each of these target vectors and the attack coefficient direction $v$. This can be calculated as follows: For the data $Z$, define the subspace 
\begin{equation*}
\Pi_\mathcal O(Z):=Z\mathcal J^*(\mathcal D). 
\end{equation*}
It collects the outputs generated by the original weakly unobservable coefficient space. Take a nonzero normal vector $u_Z$ in its orthogonal complement $\Pi_\mathcal O(Z)^\perp$, and parametrize $\xi_Z=(1/(u_Z^\mathsf{T}Zv))u_Z$. To ensure that this vector $\xi_Z$ is not orthogonal to $z_\mathrm{tar}$, we introduce the following assumption for the target vector $z_\mathrm{tar}$\footnote{
This assumption is not restrictive. In particular, if $\Pi_\mathcal O(Z)$ is a proper subspace, it has Lebesgue measure zero. Hence, $z_\mathrm{tar}\notin\Pi_\mathcal O(Z)$ generally holds. For the reason, this assumption will not be explicitly included in the design conditions in the further discussion.}.
\begin{asm}
For each pair $(z_\mathrm{tar},Z)\in \{(\tilde{x}_0,X_-),(\tilde{x}_1,\\X_+),(\tilde{u}_0,U_-),(\tilde{y}_0,Y_-)\}$, it holds that $z_\mathrm{tar}\notin\Pi_\mathcal O(Z)$.
\end{asm}

The question is whether there exists a vector $\xi_Z$ that satisfies all the design requirements of Lemma~\ref{attack design}. The following proposition  provides a solution to this question. In particular, several equivalent sufficient conditions are shown for Algorithm \ref{alg:attack} to be feasible.

\begin{prp}
\label{prp 1}
Let $\mathcal D=(X_-,X_+,U_-,Y_-)$ be the given data, and let $\mathcal J^*(\mathcal D)\subseteq\mathbb{R}^T$ be the maximum weakly unobservable coefficient space. Then, for each $Z\in\{ X_-,X_+, U_-,Y_- \}$, the following statements are equivalent:
\begin{enumerate}
\item (Feasibility of the design conditions)
There exist $v\in \mathcal J^*(\mathcal D)^\perp\setminus\{0\}$ and $\xi_Z\in\mathbb{R}^{m_Z}$ such that 
\begin{equation}
\label{design condition}
\xi_Z^\mathsf{T}Zw=0,\;\forall w\in\,\mathcal J^*(\mathcal D),\;\xi_Z^\mathsf{T}Zv=1,
\end{equation}
where $m_Z$ is the number of rows of $Z$.

\item (Geometric condition on the coefficient vector)
There exists $v\in \mathcal J^*(\mathcal D)^\perp\setminus\{0\}$ such that
\begin{equation}
\label{geometric condition}
v\notin Z^{-1}\Pi_\mathcal O(Z).
\end{equation}

\item (Dimensional condition on the coefficient space)
The following condition holds:
\begin{equation}
\label{dimensional condition}
\mathrm{dim}\,\Pi_\mathcal O(Z)<\mathrm{rank}\,Z.
\end{equation}
\end{enumerate}
Moreover, if any of (i)--(iii) holds, then the linear map $\Phi_a:\mathbb D\to\mathbb D$  in (\ref{transformations}) exists and that solves Problem \ref{problem 1}.
\end{prp}

Condition~(i) is nothing but the design requirement in Lemma~2. Condition~(ii) concerns the \emph{designability} of $\xi_Z$ through the free vector $u_Z\in\Pi_\mathcal O(Z)^\perp$ : since $\xi_Z$ is realized as $\xi_Z=(1/(u_Z^\mathsf{T}Zv))u_Z$, it guarantees that one can pick $u_Z$ so that $\Phi_Z$ satisfies the invariance constraints on $Z\mathcal J^*(\mathcal D)$ while still meeting the injection requirement. Finally, condition (iii) means that $\Pi_\mathcal O(Z)=Z\mathcal J^*(\mathcal D)$ does not fill the whole image $\mathrm{im}\,(Z)$. Thus, for each $Z$, there remains a direction outside $\Pi_\mathcal O(Z)$, which makes it possible to search for a common $v\in \mathcal J^*(\mathcal D)^\perp\setminus\{0\}$ such that $v\notin Z^{-1}\Pi_\mathcal O(Z)$. Unlike (i) and (ii), it contains no design variables and can be checked directly from the data.

\begin{algorithm}[tb]
\caption{Data-driven structural cyber attack}
\label{alg:attack}
\begin{algorithmic}[1]
\REQUIRE Data $\mathcal D = (X_-, X_+, U_-, Y_-)$;
         the maximum weakly unobservable coefficient space $\mathcal J^*(\mathcal D)\subseteq\mathbb{R}^T$;
         target vector $z_{\mathrm{tar}}$ for each $Z \in \{X_-,X_+,U_-,Y_-\}$.
\ENSURE Attacked data $\tilde{\mathcal {D}} = (\tilde X_-, \tilde X_+, \tilde U_-, \tilde Y_-)$.

\STATE Select $v \in \mathcal J^*(\mathcal D)^\perp$ with $v \notin Z^{-1}\Pi_\mathcal O(Z)$ for all $Z$.
\STATE \textbf{Construction of $\xi_Z$ and $\Phi_Z$:}
  \FOR{each $Z \in \{X_-,X_+,U_-,Y_-\}$}
    \STATE Choose a nonzero normal vector $u_Z \in \Pi_\mathcal O(Z)^\perp$.
    \STATE Set $\xi_Z := (1/(u_Z^\top Z v))u_Z$.
    \STATE Set the attack matrix $\Phi_Z := I + (z_{\mathrm{tar}} - Z v)\,\xi_Z^\mathsf{T}$,
           and then the attacked block $\tilde Z := \Phi_Z Z$.
  \ENDFOR

\STATE \textbf{Output:}
  Return $\tilde {\mathcal {D}} = (\tilde X_-, \tilde X_+, \tilde U_-, \tilde Y_-)$.
\end{algorithmic}
\end{algorithm}

\subsection{Subspace analysis under attacks}
We are now ready to state our first main result. It analyzes how the weakly unobservable coefficient space $\mathcal J^*(\mathcal D)$ and the model set $\Sigma(\mathcal D)$ transform under attacks designed by Algorithm \ref{alg:attack}.
\begin{thm}
\label{thm 1}
Suppose that any of the conditions (i)--(iii) in Proposition \ref{prp 1} holds, and the data $\mathcal D$ is transformed to $\tilde{\mathcal D}$ according to Algorithm \ref{alg:attack}. Then, the following statements hold:

\begin{enumerate}
\item $
\mathcal J^*(\mathcal D)\oplus\mathrm{span}\{v\}\subseteq \mathcal J^*(\tilde{\mathcal D})$.

\item If $\Sigma(\tilde{\mathcal D})\neq\varnothing$, then
\begin{equation*}
\Sigma(\tilde{\mathcal D})\cap\{A_\mathrm{mal}\in\mathbb{R}^{n\times n}|\;A_\mathrm{mal}\tilde{x}_0=\tilde{\lambda}\tilde{x}_0\}\neq\varnothing.
\end{equation*}

\item $\tilde{\mathcal D}$ is not informative for strong observability.
\end{enumerate}
\end{thm}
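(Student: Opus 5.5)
We prove (i) by exhibiting a subspace of $\mathbb R^T$ that satisfies the inclusion (\ref{data driven coefficient space}) for $\tilde{\mathcal D}$. Maximality of $\mathcal J^*(\tilde{\mathcal D})$ (Definition~\ref{maximum weakly unobservable coefficient space}) then gives the containment. The candidate is $\mathcal J':=\mathcal J^*(\mathcal D)\oplus\mathrm{span}\{v\}$. The sum is direct because Algorithm~\ref{alg:attack} takes $v\in\mathcal J^*(\mathcal D)^\perp\setminus\{0\}$.

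The key observation is that each block $\Phi_Z$ from Algorithm~\ref{alg:attack} acts as the identity on $Z\mathcal J^*(\mathcal D)$. Indeed, $u_Z\in\Pi_\mathcal O(Z)^\perp$, so $\xi_Z^\mathsf{T}Zw=0$ for every $w\in\mathcal J^*(\mathcal D)$. Lemma~\ref{attack design}(iii) then gives $\tilde Zw=Zw$ for all four blocks.

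Since $R$ and $P$ depend linearly on the data through (\ref{affine parameters}), we get $R(\tilde{\mathcal D})w=R(\mathcal D)w$ and $P(\tilde{\mathcal D})w=P(\mathcal D)w$ for $w\in\mathcal J^*(\mathcal D)$. Hence for such $w$ the left-hand side of the inclusion for $\tilde{\mathcal D}$ equals the original one. By (\ref{data driven coefficient space}) for $\mathcal D$, it lies in $QP(\mathcal D)\mathcal J^*(\mathcal D)\times\{0\}+\mathrm{im}[QB;D]$. Also $P(\mathcal D)\mathcal J^*(\mathcal D)=P(\tilde{\mathcal D})\mathcal J^*(\mathcal D)\subseteq P(\tilde{\mathcal D})\mathcal J'$, so this set is contained in $QP(\tilde{\mathcal D})\mathcal J'\times\{0\}+\mathrm{im}[QB;D]$.

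For the direction $v$, Lemma~\ref{attack design}(ii) gives $\tilde Zv=z_{\mathrm{tar}}$, so the targets (\ref{design vectors}) hold. Substituting into $R(\tilde{\mathcal D})v$ gives
\[
M(\tilde\lambda\tilde x_0+B\tilde u_0-B\tilde u_0)+N(D\tilde u_0-C\tilde x_0-D\tilde u_0)=\tilde\lambda M\tilde x_0 ,
\]
using $C\tilde x_0=0$. Similarly $CP(\tilde{\mathcal D})v=C\tilde x_0=0$. Thus $[R;CP](\tilde{\mathcal D})v=(\tilde\lambda QP(\tilde{\mathcal D})v,0)$, which lies in $QP(\tilde{\mathcal D})\mathcal J'\times\{0\}$ with $\beta=0$.

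Both sides of the inclusion are subspaces, so it holds on all of $\mathcal J'$, and maximality gives (i). The main subtlety is checking that the right-hand side for $\mathcal J'$ contains the right-hand side for $\mathcal J^*(\mathcal D)$, which the monotonicity argument above handles.

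For (iii), we apply Lemma~\ref{informativity under geometric condition}. We have $v\in\mathcal J^*(\tilde{\mathcal D})$ and $P(\tilde{\mathcal D})v=\tilde x_0\neq0$, since $\tilde x_0$ is an eigenvector. Hence $\mathcal J^*(\tilde{\mathcal D})\not\subseteq\ker P(\tilde{\mathcal D})$, and informativity fails.

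For (ii), take any $A_0\in\Sigma(\tilde{\mathcal D})$. We want to modify $A_0$ so that it still solves $QAP(\tilde{\mathcal D})=R(\tilde{\mathcal D})$ and also satisfies $A\tilde x_0=\tilde\lambda\tilde x_0$. From the computation above, $QA_0\tilde x_0=R(\tilde{\mathcal D})v=\tilde\lambda Q\tilde x_0$, so $\delta:=\tilde\lambda\tilde x_0-A_0\tilde x_0\in\ker Q$. We set $A=A_0+\delta h^\mathsf{T}$ with $h^\mathsf{T}\tilde x_0=1$. Then $QAP=QA_0P=R$, because $Q\delta=0$, and $A\tilde x_0=\tilde\lambda\tilde x_0$. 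So $A$ lies in the required intersection; no further constraint on $h$ is needed, since $\delta\in\ker Q$ makes $Q\delta h^\mathsf{T}P=0$ for any $h$.
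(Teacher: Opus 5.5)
Your proof is correct and follows the paper's argument. It uses the same ingredients: invariance $\tilde Zw=Zw$ on $\mathcal J^*(\mathcal D)$ via Lemma~\ref{attack design}(iii), the identities $R(\tilde{\mathcal D})v=\tilde\lambda QP(\tilde{\mathcal D})v$ and $CP(\tilde{\mathcal D})v=0$, maximality of $\mathcal J^*(\tilde{\mathcal D})$, and, for (ii), a rank-one correction $A_0+\delta h^\mathsf{T}$ with $\delta\in\ker Q$. The differences are cosmetic: you verify the inclusion for $\mathcal J^*(\mathcal D)\oplus\mathrm{span}\{v\}$ in one step rather than for $\mathrm{span}\{v\}$ and $\mathcal J^*(\mathcal D)$ separately, and you allow any $h$ with $h^\mathsf{T}\tilde x_0=1$ where the paper takes $h=\tilde x_0/\|\tilde x_0\|_2^2$.
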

\vspace{1mm}
\proof
(i) First, we show that $\mathrm{span}\{v\}\subseteq \mathcal J^*(\tilde{\mathcal D})$. Regarding $R(\tilde{
\mathcal D})$ after attack, substituting the attacked data and attack direction $v$ into (\ref{affine parameters}) and using (\ref{relation between attacked data and targets}) yields
\begin{equation*}
\begin{aligned}[b]
R(\tilde{\mathcal D})v&=M(\tilde{X}_+v-B\tilde{U}_-v)+N(\tilde{Y}_-v-C\tilde{X}_-v-D\tilde{U}_-v)\\
&=M(\tilde{x}_1-B\tilde{u}_0)+N(\tilde{y}_0-C\tilde{x}_0-D\tilde{u}_0).
\end{aligned}
\end{equation*}
Furthermore, from (\ref{design vectors}), we have $R(\tilde{\mathcal D})v=M\tilde{\lambda}\tilde{x}_0$. Here, using the fact that $M=Q$ and $\tilde{x}_0=\tilde{X}_-v=P(\tilde{\mathcal D})v$,
\begin{equation}
\label{Rv=lambdaQPv}
R(\tilde{\mathcal D})v=\tilde{\lambda}QP(\tilde{\mathcal D})v.
\end{equation}
Also, $CP(\tilde{\mathcal D})v=C\tilde{X}_-v=C\tilde{x}_0=0$.
To summarize the above,
\begin{equation*}
\begin{bmatrix}
R(\tilde{\mathcal D})\\CP(\tilde{\mathcal D})
\end{bmatrix}v=
\begin{bmatrix}
QP(\tilde{\mathcal D})\\0
\end{bmatrix}(\tilde{\lambda}v)+
\begin{bmatrix}
QB\\D
\end{bmatrix}0.
\end{equation*}
Therefore, 
\begin{equation*}
\begin{bmatrix}
R(\tilde{\mathcal D})\\CP(\tilde{\mathcal D})
\end{bmatrix}\mathrm{span}\{v\}\subseteq
QP(\tilde{\mathcal D})\mathrm{span}\{v\}\times\{0\}+\mathrm{im}
\begin{bmatrix}
QB\\D
\end{bmatrix}.
\end{equation*}
This shows that $\mathrm{span}\{v\}$ satisfies (\ref{data driven attacked coefficient space}). Moreover, since $\mathcal J^*(\tilde{\mathcal D})$ is the largest subspace satisfying (\ref{data driven attacked coefficient space}), we have
\begin{equation}
\label{span in J}
\mathrm{span}\{v\}\subseteq\mathcal J^*(\tilde{\mathcal D}).
\end{equation}

Next, we show $\mathcal J^*(\mathcal D)\subseteq \mathcal J^*(\tilde {\mathcal D})$. Recall that $\mathcal J^*(\mathcal D)$ is defined as the largest subspace satisfying the inclusion (\ref{data driven coefficient space}) for the original data $\mathcal D$. 
Fix $Z\in\{X_-,X_+,U_-,Y_-\}$.
By Proposition \ref{prp 1}, the design condition in (\ref{design condition}) ensures that
\begin{equation*}
\xi_Z^\mathsf{T}Z w = 0,\quad \forall w\in\mathcal  J_*(\mathcal D). 
\end{equation*}
Hence, for every $w\in \mathcal J^*(\mathcal D)$, we have $w\in\mathrm{ker}\,(\xi_Z^\mathsf{T} Z)$.
Now by Lemma \ref{attack design}--(iii), it follows that $\Phi_Z Z w = Z w$, and therefore $\tilde Z w = \Phi_Z Z w = Z w$.
Since this holds for any data $Z$, we conclude that for all $w\in \mathcal J^*(\mathcal D)$,
\begin{equation*}
\tilde X_- w = X_- w,\, \tilde X_+ w = X_+ w,\, \tilde U_- w = U_- w,\, \tilde Y_- w = Y_- w. 
\end{equation*}
Consequently, all data-dependent linear maps built from these data matrices coincide on $\mathcal J^*(\mathcal D)$:
\begin{equation*}
P(\tilde {\mathcal D})w = P(\mathcal D)w,\;R(\tilde {\mathcal D})w = R(\mathcal D)w,
\quad \forall w\in \mathcal J^*(\mathcal D).
\end{equation*}
Furthermore, the subspace images coincide as
\begin{equation*}
P(\tilde {\mathcal D})\mathcal J^*(\mathcal D) = P(\mathcal D)\mathcal J^*(\mathcal D).
\end{equation*}
Using the equalities above and (\ref{data driven coefficient space}), we can show
\begin{equation*}
\begin{split}
\begin{bmatrix}R(\tilde {\mathcal D})\\ CP(\tilde {\mathcal D})\end{bmatrix}\mathcal J^*(\mathcal D)&=\begin{bmatrix}R(\mathcal D)\\ CP(\mathcal D)\end{bmatrix}\mathcal J^*(\mathcal D)
\\&\subseteq
QP(\mathcal D)\mathcal J^*(\mathcal D)\times\{0\} + \mathrm{im}\begin{bmatrix}QB\\ D\end{bmatrix}\\&=QP(\tilde {\mathcal D})\mathcal J^*(\mathcal D)\times\{0\} + \mathrm{im}\begin{bmatrix}QB\\ D\end{bmatrix}.
\end{split}
\end{equation*}
Therefore, $\mathcal J^*(\mathcal D)$ is a feasible subspace for the inclusion (\ref{data driven coefficient space}) associated with
the attacked data $\tilde {\mathcal D}$. By the maximality of $\mathcal J^*(\tilde {\mathcal D})$,
it must contain any feasible subspace, and in particular,
\begin{equation}
\label{J in J tilde}
\mathcal J^*(\mathcal D)\subseteq \mathcal J^*(\tilde {\mathcal D}).  
\end{equation}

We have now established (\ref{span in J}) and (\ref{J in J tilde}). It then follows
\begin{equation}
\label{span + J in J tilde}
\mathrm{span}\{v\}+\mathcal J^*(\mathcal D)\subseteq \mathcal J^*(\tilde {\mathcal D}).
\end{equation}
Furthermore, $v\in\mathcal J^*(\mathcal D)^\perp$ by the design in Algorithm \ref{alg:attack}, and hence we have $\mathrm{span}\{v\}\cap\mathcal J^*(\mathcal D)=\{0\}$. Thus, the sum in (\ref{span + J in J tilde}) becomes a direct sum. 

(ii) Since $\Sigma(\tilde{\mathcal D})\neq\varnothing$, arbitrarily select one model $A_0\in \Sigma(\tilde{\mathcal D})$. From the definition of the affine model set in (\ref{affine set}), we have $R(\tilde{\mathcal D})=QA_0P(\tilde{\mathcal D})$. By multiplying both sides by the vector $v$ from the right, $R(\tilde{\mathcal D})v=QA_0P(\tilde{\mathcal D})v=QA_0\tilde{x}_0$. On the other hand, from (\ref{Rv=lambdaQPv}), we obtain $R(\tilde{\mathcal D})v=\tilde{\lambda} Q\tilde{x}_0$. Using these relations, we see that $Q(\tilde{\lambda}I_n-A_0)\tilde{x}_0=0$.
Here, let $w := (\tilde{\lambda}I_n-A_0)\tilde{x}_0$. Then, $Qw = 0$. Next, since $\tilde{x}_0\neq0$, define the matrix $\Delta A:=w\tilde{x}_0^\mathsf{T}/{\|\tilde{x}_0\|_2^2}$ and set $A^*:=A_0+\Delta A$. Then, we have
\begin{equation*}
\begin{split}
A^*\tilde{x}_0=A_0\tilde{x}_0+\Delta A\tilde{x}_0&=A_0\tilde{x}_0+\frac{w}{\|\tilde{x}_0\|_2^2}\tilde{x}_0^\mathsf{T}\tilde{x}_0\\&=A_0\tilde{x}_0+w=\tilde{\lambda}\tilde{x}_0.
\end{split}
\end{equation*}
Therefore, $A^*$ has an eigenpair $(\tilde{\lambda},\tilde
x{_0})$. Finally, we show that $A^*\in\Sigma (\tilde{\mathcal D})$. Since $Qw=0$, we have $Q\Delta A=0$. Thus,
\begin{equation*}
QA^*P(\tilde{\mathcal D})=Q(A_0+\Delta A)P(\tilde{\mathcal D})=QA_0P(\tilde{\mathcal D})=R(\tilde{\mathcal D}).
\end{equation*}
Therefore, $A^*$ is included in the affine model set $\Sigma (\tilde{\mathcal D})$.

(iii) The proof of this part is straightforward. By Algorithm \ref{alg:attack} and the result of (i),
\begin{equation*}
P(\tilde{\mathcal D})v=\Phi_-^XX_-v=\tilde{x}_0\neq0,\;v\in\mathrm{span}\{v\}\subseteq\mathcal J^*(\tilde{\mathcal D}).
\end{equation*}
Hence, it follows that $\mathcal J^*(\tilde{\mathcal D})\nsubseteq\mathrm{ker}\,P(\tilde{\mathcal D})$.
\endproof

This theorem not only guarantees that the attacker's specified malicious state is contained within the subspace, but also demonstrates that unobservable models are included in the model set as described in the data $\tilde{\mathcal D}$. This also suggests that it may render quadratic stabilization impossible for the entire model set in the conventional data informativity design problem. That is, injecting an undetectable or unstabilizable model as a malicious model makes it impossible to simultaneously stabilize all models that can be explained by the data.

\section{Minimum norm attacks}
In this section, we discuss the design of attacks that destroy strong observability with minimal data perturbation. In contrast to Section~III where the target eigenpair is prescribed a priori, the minimum-norm formulation jointly optimizes over $(\tilde \lambda,\tilde x_0)$ and the attack, thereby identifying the eigenpair whose injection requires the least perturbation. This section describes the formulation of the minimization problem for finding such optimal eigenpairs, and the relation between the minimum norm and the model set $\Sigma(\mathcal D)$.

Suppose that the original data $\mathcal D$ satisfies the dimensional condition (\ref{dimensional condition}) in Proposition \ref{prp 1}. By Algorithm \ref{alg:attack}, for each data $Z\in\{X_-,X_+,U_-,Y_-\}$, the attacked data $\tilde{Z}$ is given as
\begin{equation}
\tilde{Z}=Z+(z_\mathrm{tar}-Zv)\xi_Z^\mathsf{T}Z.
\end{equation}
Here, we define the data perturbation $\Delta_Z$ as
\begin{equation}
\Delta_Z:=\tilde{Z}-Z=(z_\mathrm{tar}-Zv)\xi_Z^\mathsf{T}Z.
\end{equation}
That is, for the stacked data $\mathcal D$,
\begin{equation}
\label{data perturbation}
\tilde{\mathcal D}=\mathcal D+\Delta\mathcal D,\;\Delta\mathcal D:=\begin{bmatrix}
    \Delta_{X_-}^\mathsf{T} & \Delta_{X_+}^\mathsf{T} & \Delta_{U_-}^\mathsf{T} & \Delta_{Y_-}^\mathsf{T}
\end{bmatrix}^\mathsf{T}.
\end{equation}
Now, we wish to design an attack that minimizes this data perturbation across all data. To do so, we define the following objective function:
\begin{equation}
\label{objective}
\mathcal L(\{z_\mathrm{tar}\},v,\{\xi_Z\}):=\|\Delta\mathcal D\|_\mathrm{F}^2=\sum_{Z\in\{X_-,X_+,U_-,Y_-\}}\|\Delta_Z\|_\mathrm{F}^2.
\end{equation}
Since the Frobenius norm of the outer product is the product of the 2-norms of the two vectors, we have
\begin{equation}
\label{perturbation}
\|\Delta_Z\|_\mathrm{F}^2=\|(z_\mathrm{tar}-Zv)\xi_Z^\mathsf{T}Z\|_\mathrm{F}^2=\|z_\mathrm{tar}-Zv\|_2^2\|\xi_Z^\mathsf{T}Z\|_2^2.
\end{equation}
Here, to proceed with the minimal perturbation analysis, we consider setting $\Delta_{X_-}=0$, $\Delta_{U_-}=0$, and $\Delta_{Y_-}=0$, thereby assigning the effect of data perturbation to $X_+$. It is sufficient to set $\Phi_-^{X} = I_n$, $\Phi_-^{U} = I_m$, and $\Phi_-^Y= I_p$ and take $\tilde x_0 := X_- v$, $\tilde u_0 := U_- v$, and $\tilde y_0 := Y_- v$ for the chosen coefficient direction $v$. Furthermore, using the relation $\tilde{x}_1=\tilde{\lambda}\tilde{x}_0+B\tilde{u}_0$ among the target vectors, the objective function (\ref{objective}) with (\ref{perturbation}) becomes as follows:
\begin{equation}
\label{objective function 2}
\mathcal L(\lambda,v,\xi)=\|(\lambda X_--X_++BU_-)v\|_2^2\|\xi^\mathsf{T} X_+\|_2^2,
\end{equation}
where $\lambda:=\tilde{\lambda}$ and $\xi:=\xi_{X_+}$. Here, for a given $(\lambda, v)$, the solution satisfying (\ref{design condition}) is generally not unique, leaving degrees of freedom while preserving the constraints. Therefore, to perform a norm-minimizing attack, we first solve $\|\xi^\mathsf{T}X_+\|_2$ and adopt $\xi$ with the smallest norm. This is formulated as the following minimization problem:
\begin{mini!}
    {}          
    {\|\xi^\mathsf{T}X_+\|_2^2}{
    \label{prblem objective xi}}
    {\label{prb xi}}
    \addConstraint{\xi}{\in\mathbb{R}^n \label{eq:constr xi}}
    \addConstraint{\xi^\mathsf{T}X_+v}{=1 \label{eq:constr xi1}}
    \addConstraint{\xi^\mathsf{T}X_+w}{=0, \;\forall w\in\mathcal J^*(\mathcal D). \label{eq:constr xi2}}
\end{mini!}

Fortunately, the solution to this problem can be written in the following closed form: 
\begin{equation*}
\zeta^*:=X_+^\mathsf{T}\xi^*=\frac{\mathrm{proj}_{\mathcal S_+}(v)}{\|\mathrm{proj}_{\mathcal S_+}(v)\|_2^2},\,
\|\zeta^*\|_2^2=\frac{1}{\|\mathrm{proj}_{\mathcal S_+}(v)\|_2^2},
\end{equation*}
where $\mathrm{proj}_{\mathcal S_+}(\cdot)$ is the orthogonal projection onto the subspace $\mathcal S_+:=(X_+^{-1}\Pi_\mathcal O(X_+))^\perp=\mathcal J^*(\mathcal D)^\perp\cap\mathrm{im}\,(X_+^\mathsf{T})\subseteq\mathbb{R}^T$. This is the global minimum solution for the given $v$. Therefore, by designing $\zeta=X_+^\mathsf{T}\xi$ after optimizing $v$, we always obtain $\xi$ with the minimum norm. Substituting this into (\ref{objective function 2}), we have
\begin{equation}
\label{objective function 3}
\mathcal L(\lambda,v)=\frac{\|(\lambda X_--X_++BU_-)v\|_2^2}{\|\mathrm{proj}_{\mathcal S_+}(v)\|_2^2}.
\end{equation}

Based on the above discussion, the minimal-norm attack design problem under $\Delta_{U_-}=0$, $\Delta_{Y_-}=0$, and $\Delta_{X_-}=0$ is given by the following optimization problem:
\begin{mini!}
    {\lambda,v}          
    {\frac{\|(\lambda X_--X_++BU_-)v\|_2^2}{\|\mathrm{proj}_{\mathcal S_+}(v)\|_2^2} \label{prblem objective}}
    {\label{prb}}
    {}
    \addConstraint{v}{\in\mathcal (\mathcal J^*(\mathcal D)^\perp\cap\mathrm{ker}\,(CX_-))\setminus \{0\}  \label{eq:constr1}}
    \addConstraint{v}{\notin X_+^{-1}\Pi_\mathcal O(X_+) \label{eq:constr2}}
    \addConstraint{\zeta^*}{=X_+^\mathsf{T}\xi^*=\frac{\mathrm{proj}_{\mathcal S_+}(v)}{\|\mathrm{proj}_{\mathcal S_+}(v)\|_2^2} \label{eq:constr3}.}
\end{mini!}

An interesting aspect of this optimization problem is that, despite being an attack derived from geometric conditions, the matrix pencil $\lambda X_--X_++BU_-$ used in the data-driven Hautus test naturally appears in the objective function. That is, the matrix appears in the observability rank condition for the noise free case ($M=I_n$ and $N=0$) based on informativity, as demonstrated in (\hspace{-0.01pt}\cite{Eising}, Corollary 11). From this perspective, the minimum-norm attack selects the mode $(\lambda,x_0)$ closest to the observability boundary ---that is, the mode with the smallest Hautus residual $(\lambda X_--X_++BU_-)v=(\lambda I_n-A_{\mathrm{true}})x_0$--- for the original data and pushes it into the unobservable region with minimal displacement. 

To demonstrate this, for a given output matrix $C$, we define the set of all matrices $A$ for which the pair $(A, C)$ becomes unobservable as follows:
\begin{equation*}
\mathrm{UNOBS}:=\{A\in\mathbb{R}^{n\times n}|\;(A,C)\;\text{is unobservable}\}.
\end{equation*}
For a given matrix $A$, a metric between the pair $(A, C)$ and $\text{UNOBS}$ is defined as
\begin{equation*}
d_\text{UNOBS}(A):=\underset{\lambda\in\mathbb C}{\mathrm{inf}}\,\sigma_\mathrm{min}\bigg(\begin{bmatrix}
    \lambda I_n -A\\C
\end{bmatrix}\bigg),
\end{equation*}
where $\sigma_\mathrm{min}(\cdot)$ denotes the smallest singular value of the matrix. This is the observability version of the metric in \cite{REising} for controllability. Here, since the true model $A_\text{true}$ is currently unknown, we consider extending this metric using the affine model set (\ref{affine set}) described by the data $\mathcal D$ as follows:
\begin{equation*}
\label{data diven metric}
d_\text{UNOBS}(\Sigma (\mathcal D)):=\underset{A\in \Sigma (\mathcal D)}{\mathrm{inf}}\,d_\text{UNOBS}(A).
\end{equation*}
This gives the distance from the models consistent with the data $\mathcal D$ to the set of unobservable systems. The next theorem is our second main result of this paper.
\begin{thm}
\label{thm 2}
Suppose that there is no noise in the original data $\mathcal D$, i.e., we set $M=I_n$ and $N=0$. Then, under the linear transformations generated by Algorithm \ref{alg:attack}, the following inequality holds:
\begin{equation*}
\underset{\lambda\in\mathbb C,\,v\in\mathcal F}{\mathrm{inf}}\,\|\Delta\mathcal D\|_\mathrm{F}\geq d_\mathrm{UNOBS}(\Sigma (\mathcal D))\,\sigma_\mathrm{min}(X_-),
\end{equation*}
where $\Delta \mathcal D$ is defined in (\ref{data perturbation}) and $\mathcal F$ is the set of all feasible $v$ for (\ref{eq:constr1}) and (\ref{eq:constr2}).
\end{thm}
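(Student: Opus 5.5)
The plan is to reduce the Frobenius norm of the perturbation to a product of two scalar factors. The first factor will be bounded below by the Hautus-type quantity $d_\mathrm{UNOBS}$. The second will be bounded below through the normalization constraint $\xi^\mathsf{T}X_+v=1$.

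Throughout, we work in the setting of this section: $\Delta_{X_-}=\Delta_{U_-}=\Delta_{Y_-}=0$ and $\tilde x_0=X_-v$, $\tilde u_0=U_-v$. By (\ref{perturbation}) and (\ref{objective function 2}), for every feasible triple $(\lambda,v,\xi)$,
\begin{equation*}
\|\Delta\mathcal D\|_\mathrm{F}=\|(\lambda X_--X_++BU_-)v\|_2\,\|\xi^\mathsf{T}X_+\|_2 .
\end{equation*}

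For the second factor, the constraint $\xi^\mathsf{T}X_+v=1$ and the Cauchy--Schwarz inequality give $1\le\|X_+^\mathsf{T}\xi\|_2\|v\|_2$. Hence $\|\xi^\mathsf{T}X_+\|_2\ge 1/\|v\|_2$.

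For the first factor we use the noise-free assumption $M=I_n$, $N=0$. Then $\Sigma(\mathcal D)=\{A\mid X_+-BU_-=AX_-\}$. If $\Sigma(\mathcal D)=\varnothing$, the infimum defining $d_\mathrm{UNOBS}(\Sigma(\mathcal D))$ is $+\infty$ by convention, so that case needs separate treatment (the natural reading is that the theorem presupposes $\Sigma(\mathcal D)\neq\varnothing$). Otherwise, fix any $A\in\Sigma(\mathcal D)$. Then
\begin{equation*}
(\lambda X_--X_++BU_-)v=(\lambda I_n-A)X_-v=(\lambda I_n-A)\tilde x_0 .
\end{equation*}
Constraint (\ref{eq:constr1}) gives $v\in\ker(CX_-)$, so $C\tilde x_0=0$. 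Therefore
\begin{equation*}
\|(\lambda I_n-A)\tilde x_0\|_2=\left\|\begin{bmatrix}\lambda I_n-A\\ C\end{bmatrix}\tilde x_0\right\|_2\ge\sigma_\mathrm{min}\!\left(\begin{bmatrix}\lambda I_n-A\\ C\end{bmatrix}\right)\|\tilde x_0\|_2\ge d_\mathrm{UNOBS}(A)\,\|\tilde x_0\|_2 .
\end{equation*}
Since this holds for every $A\in\Sigma(\mathcal D)$, it holds with $d_\mathrm{UNOBS}(\Sigma(\mathcal D))$ in place of $d_\mathrm{UNOBS}(A)$.

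Combining the two factors gives $\|\Delta\mathcal D\|_\mathrm{F}\ge d_\mathrm{UNOBS}(\Sigma(\mathcal D))\,\|X_-v\|_2/\|v\|_2$. The final step is to bound $\|X_-v\|_2\ge\sigma_\mathrm{min}(X_-)\|v\|_2$ and then take the infimum over $\lambda\in\mathbb C$ and $v\in\mathcal F$.

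The main obstacle is exactly this last step. When $T>n$, the matrix $X_-$ has a nontrivial kernel, so the inequality holds with $\sigma_\mathrm{min}(X_-)$ read as the $n$-th singular value only for $v\in(\ker X_-)^\perp=\mathrm{im}\,X_-^\mathsf{T}$. I would first try to show that $v$ may be replaced by its component in $\mathrm{im}\,X_-^\mathsf{T}$ without increasing the bound. That replacement leaves $\tilde x_0$ unchanged and can only shrink $\|v\|_2$.

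The difficulty is that the replacement also changes the normalization $\xi^\mathsf{T}X_+v=1$. To control this, I would refine the second-factor bound to $\|\xi^\mathsf{T}X_+\|_2\ge1/\|\mathrm{proj}_{\mathcal S_+}(v)\|_2$, using the closed form for $\zeta^*$. I would then use the noise-free relation $X_+=AX_-+BU_-$ to compare $\mathrm{proj}_{\mathcal S_+}(v)$ with the component of $v$ in $\mathrm{im}\,X_-^\mathsf{T}$. If that comparison fails in general, I would state the result under the reading that $\sigma_\mathrm{min}(X_-)$ denotes $\inf_{v\in\mathcal F}\|X_-v\|_2/\|v\|_2$. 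This quantity is positive because feasibility forces $\tilde x_0=X_-v\neq0$. I would then note that it coincides with the usual smallest singular value when $X_-$ has full column rank.
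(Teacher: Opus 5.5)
You correctly derive the factorization $\|\Delta\mathcal D\|_\mathrm{F}=\|(\lambda X_--X_++BU_-)v\|_2\,\|\xi^\mathsf{T}X_+\|_2$. The Cauchy--Schwarz bound $\|\xi^\mathsf{T}X_+\|_2\ge 1/\|v\|_2$ is also correct, as is the Hautus-type bound $\|(\lambda I_n-A)X_-v\|_2\ge d_\mathrm{UNOBS}(\Sigma(\mathcal D))\,\|X_-v\|_2$, which uses $CX_-v=0$.

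The gap is the last step, which you leave open. Nothing in your argument gives $\|X_-v\|_2\ge\sigma_\mathrm{min}(X_-)\|v\|_2$ for $v\in\mathcal F$. Your fallback, redefining $\sigma_\mathrm{min}(X_-)$ as $\inf_{v\in\mathcal F}\|X_-v\|_2/\|v\|_2$, changes the statement rather than proving it. The planned detour through $\mathrm{proj}_{\mathcal S_+}(v)$ and $X_+=AX_-+BU_-$ does not supply the bound by itself either, since $\mathrm{im}\,X_+^\mathsf{T}\not\subseteq\mathrm{im}\,X_-^\mathsf{T}$ in general.

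The missing idea is that $\ker X_-\subseteq\mathcal J^*(\mathcal D)$. Then the constraint $v\in\mathcal J^*(\mathcal D)^\perp$ in (\ref{eq:constr1}) already forces $v\in(\ker X_-)^\perp=\mathrm{im}\,X_-^\mathsf{T}$, so there is no kernel component to remove. To prove the inclusion:
\begin{enumerate}
\item Note that $A_\mathrm{true}\in\Sigma(\mathcal D)$. Noise-free data satisfy $X_+-BU_-=A_\mathrm{true}X_-$, and more generally Assumption~\ref{assumption A_dat} gives $R(\mathcal D)=MA_\mathrm{true}X_-$. So the case $\Sigma(\mathcal D)=\varnothing$ that you set aside never occurs.
\item Every $w\in\ker X_-$ then satisfies $R(\mathcal D)w=QA_\mathrm{true}X_-w=0$ and $CX_-w=0$.
\item Hence $\mathcal J^*(\mathcal D)+\ker X_-$ still satisfies (\ref{data driven coefficient space}). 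The left-hand side is unchanged because the added directions map to zero. The right-hand side is unchanged because $X_-(\mathcal J^*(\mathcal D)+\ker X_-)=X_-\mathcal J^*(\mathcal D)$.
\item Maximality of $\mathcal J^*(\mathcal D)$ gives the inclusion.
\end{enumerate}

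Consequently $\|X_-v\|_2\ge\sigma_\mathrm{min}(X_-)\|v\|_2$ for every $v\in\mathcal F$. This holds even with the smallest nonzero singular value, so it is valid under any convention for $\sigma_\mathrm{min}$. It also justifies your unproved claim that $X_-v\neq0$ on $\mathcal F$. Your simple bound $1/\|v\|_2$ then finishes the proof: $\|\Delta\mathcal D\|_\mathrm{F}\ge d_\mathrm{UNOBS}(\Sigma(\mathcal D))\,\sigma_\mathrm{min}(X_-)$ for every $\lambda\in\mathbb C$ and $v\in\mathcal F$, and you take the infimum. The refined $\mathcal S_+$ bound is not needed.
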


We would like to highlight that this theorem states that the minimum data perturbation depends on the unobservability of the system described by the original data. That is, the closer the affine model set is to being unobservable, the smaller the data perturbation can be. This is an intuitively reasonable characterization of the class of attacks considered in this paper. Moreover, this result suggests that anomaly detection based on attack norms may not be effective in certain cases. If the original data are not informative for observability, then there exists at least one unobservable model described by the data, and thus $d_{\mathrm{UNOBS}}(\Sigma(\mathcal D))=0$. That is, there is a possibility that an attack can be constructed with perturbations near zero.
\begin{rem}
This paper limited the analysis to attacks only against $X_+$, but in practice, detection may occur due to time series inconsistencies in both $X_-$ and $X_+$. To counteract this, stealthiness can be enhanced by methods such as tampering only with $U_-$ or $Y_-$, which is left for future research.
\end{rem}

\begin{figure}[t]
\centering
\resizebox{\columnwidth}{!}{%
\begin{tikzpicture}[
    >=Stealth,
    thick,
    state/.style={circle, draw, minimum size=8mm, inner sep=0pt},
    lab/.style={font=\scriptsize, fill=white, inner sep=1.0pt},
    inlab/.style={font=\scriptsize, fill=white, inner sep=1.0pt},
    outlab/.style={font=\scriptsize, fill=white, inner sep=1.0pt}
]

\node[state] (x1) at (0,0)   {$x_1$};
\node[state] (x2) [right=13mm of x1] {$x_2$};
\node[state] (x3) [right=13mm of x2] {$x_3$};
\node[state] (x4) [right=13mm of x3] {$x_4$};
\node[state] (x5) [right=13mm of x4] {$x_5$};

\path[->] (x1) edge[loop above] node[lab] {$0.8$} ();
\path[->] (x2) edge[loop above] node[lab] {$0.7$} ();
\path[->] (x3) edge[loop above] node[lab] {$0.6$} ();
\path[->] (x4) edge[loop above] node[lab] {$0.5$} ();
\path[->] (x5) edge[loop above] node[lab] {$0.4$} ();

\draw[->] (x2) -- node[lab, above] {$0.1$} (x1);
\draw[->] (x3) -- node[lab, above] {$0.1$} (x2);
\draw[->] (x4) -- node[lab, above] {$0.02$} (x3);
\draw[->] (x5) -- node[lab, above] {$0.05$} (x4);

\draw[->] ($(x1.225)+(-12mm,-8mm)$) -- node[inlab, left, yshift=1.0mm] {$u$} (x1.225);
\draw[->] ($(x4.225)+(-12mm,-8mm)$) -- node[inlab, left, yshift=1.0mm] {$u$} (x4.225);

\draw[->] (x1.south) -- node[outlab, right] {$y_1=x_1$} ($(x1.south)+(0,-11mm)$);
\draw[->] (x2.south) -- node[outlab, right] {$y_2=x_2$} ($(x2.south)+(0,-11mm)$);

\end{tikzpicture}%
}
\caption{Linear dynamical network model (5-states)}
\label{fig:linear dynamical network model}
\end{figure}
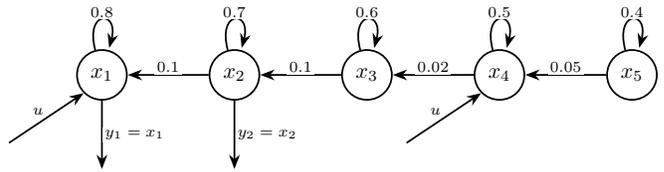

\section{Numerical example}
In this section, we consider the linear dynamical network model shown in Fig.~\ref{fig:linear dynamical network model}, where the state components $x_1,\dots,x_5$ represent network nodes and the directed couplings are encoded in the state matrix $A$. In particular, the network topology is a line where information propagates along a single directed path toward the measured nodes with $y_1=x_1$ and $y_2=x_2$. The line topology provides high interpretability for nodes critical to breaking observability.

The dynamics of this system is expressed as follows:
\begin{equation*}
\begin{split}
x_{k+1}&=\begin{bmatrix}
    0.8 & 0.1 & 0 & 0 & 0\\
     0 & 0.7 & 0.1 & 0 & 0\\
     0 & 0 & 0.6 & 0.02 & 0\\
     0 & 0 & 0 & 0.5 & 0.05\\
     0 & 0 & 0 & 0 & 0.4
\end{bmatrix}x_k+\begin{bmatrix}
    1\\0\\0\\1\\0
\end{bmatrix}u_k,\\y_k&=\begin{bmatrix}
    1 & 0 & 0 & 0 & 0\\
    0 & 1 & 0 & 0 & 0
\end{bmatrix}x_k.
\end{split}
\end{equation*}
This system is observable. For data collection, we set the time duration as $T=100$ and used random initial states without any noise. We computed the minimum-norm attack transformation $\Phi_a^*$ for this data based on (\ref{prb}) and obtained
\begin{equation*}
\begin{split}
\Phi_+^X&=
\begin{bmatrix}
1 & 0 & 0 & 0 & 0 \\
0 & 0.9167 & -0.1763 & 0 & 0 \\
0 & -0.0155 & 0.9672 & 0 & 0 \\
0 & -0.0039 & -0.0082 & 1 & 0 \\
0 & -0.0019 & -0.0040 & 0 & 1 \\
\end{bmatrix},
\end{split}
\end{equation*}
where elements with magnitudes less than $10^{-12}$ were set to zero. In this case, the minimum norm is $\|\Delta_{X_+}\|_{\mathrm{F}}=\|\tilde{X}_+-X_+\|_{\mathrm{F}}=0.0758$, and its relative error is $\epsilon_+^X:=\|\Delta_{X_+}\|_{\mathrm{F}}/\|X_+\|_{\mathrm{F}}=3.7\times10^{-3}$. Also, the unobservable eigenpair embedded in $\mathcal V^*(\tilde{\mathcal D})$ was taken as $\tilde{\lambda}^*=0.5014$ and $\tilde{x}_0^*=\begin{bmatrix}
    0 & 0 & -0.0194 & 0.0776 & 0.0004
\end{bmatrix}^\mathsf{T}$.

Now, we quantify the nodes where the data manipulation is concentrated. To do so, we measured the row-wise modification magnitude $\delta_i := \|e_i^\top \Delta_{X_+}\|_2$, $i=1,\dots,5$, where $e_i$ is the vector of the $i$-th canonical basis. Then, we can write $\|\Delta_{X_+}\|_\mathrm{F}^2=\sum_{i=1}^5 \delta_i^2$, and the corresponding attack contribution ratio of row $i$ is expressed as
\begin{equation*}
\rho_i := \frac{\delta_i^2}{\|\Delta_{X_+}\|_\mathrm{F}^2}\in[0,1],\qquad \sum_{i=1}^5\rho_i=1.
\end{equation*}

Based on the simulated data of 100 steps, we found that the manipulation is highly localized as the attack contribution ratios of nodes 2 and 3 are especially high: $\rho_2\approx 0.9640$ and $\rho_3\approx 0.0334$. The remaining rows contribute negligibly (and the first row is essentially zero up to numerical precision). This means that most of the attack energy is spent on modifying the update equations of $x_2$ and $x_3$.

This localization admits a clear network-structural explanation in terms of hop distance to the measured nodes 1 and 2. Let the hop distance $d(i)$ as the length of the shortest directed path from node $i$ to either measured node. In a network with a line topology, nodes with shorter hop distances influence the measured nodes through fewer links. Consequently, to induce a comparable degradation of observability as seen from the measurements, perturbations applied closer (in hops) to the measured nodes are typically more effective, and a minimum-energy manipulation tends to concentrate on low-hop nodes. Therefore, it is important to focus on defending the vulnerable nodes identified based on such network topologies.

\section{Conclusion}
In this paper, we have studied malicious data transformations against informativity-based analysis in data-driven control, focusing on strong observability. We have shown that an invertible linear transformation of finite trajectory data can destroy strong-observability informativity by embedding a nonzero state into the data-driven weakly unobservable subspace. We have also provided constructive attack designs and a minimum-norm formulation to quantify the smallest distortion required to invalidate informativity. In future work, we will extend our framework to other properties such as controllability and stabilizability. In addition, attack detection and data recovery methods based on the transformation of invariant subspaces are important challenges.


\begin{thebibliography}{99}
\bibitem{DePersis}
C.~De Persis and P.~Tesi, ``Formulas for data-driven control: Stabilization,
optimality, and robustness,'' \textit{IEEE Trans. Autom. Control}, vol.~65, no.~3,
pp.~909--924, 2020.

\bibitem{Informativity}
H.~J. van Waarde, M.~K. Camlibel, and H.~L. Trentelman, \textit{Data-Based Linear Systems and Control Theory}, Kindle Direct Publishing, 2025.

\bibitem{Mishra}
V.~K. Mishra, H.~J. van Waarde, and N.~Bajcinca, ``Data-driven criteria
for detectability and observer design for lti systems,'' in \textit{Proc. 61st IEEE Conf.
Decis. Control}, pp.~4846--4852, 2022.

\bibitem{DCS} J.~Eising and J.~Cortés, ``Informativity for centralized design of distributed controllers for networked systems,'' in \textit{Proc. Eur. Control Conf.}, pp.~681--686, 2022.

\bibitem{MAS} J.~Jiao, H.~J. van Waarde, H.~L. Trentelman, M.~K. Camlibel, and S.~Hirche, ``Data-driven output synchronization of heterogeneous leader-follower
 multi-agent systems,'' in \textit{Proc. IEEE Conf. Decis.
Control}, pp.~466--471, 2021.

\bibitem{Necsys25} I.~Takaki, A.~Cetinkaya, and H.~Ishii, ``Trade-off in quantization between data-driven design and control inputs,'' in \textit{Proc. 10th IFAC Conference on Networked Systems}, pp.~103--108, 2025.

\bibitem{Russo} A.~Russo, ``Analysis and detectability of offline data poisoning attacks on linear dynamical systems,'' in \textit{Proc. Learning for Dynamics and Control Conf.}, pp.~1086--1098, PMLR, 2023.

\bibitem{Sasahara} H.~Sasahara, ``Adversarial destabilization attacks to direct data-driven control,'' \textit{arXiv preprint arXiv:2507.14863}, 2025.

\bibitem{Russo Proutiere} A.~Russo and A.~Proutiere, ``Poisoning attacks against data-driven control methods,'' in \textit{Proc. Amer. Control Conf.}, pp.~3234--3241, 2021.

\bibitem{Anand}  S.~C. Anand, M.~S. Chong, and A.~M. Teixeira, ``Data-driven
identification of attack-free sensors in networked control systems,'' \textit{arXiv preprint arXiv:2312.04845}, 2023.

\bibitem{Yan} J.~Yan, I.~Markovsky, and J.~Lygeros, ``Secure data reconstruction: A
direct data-driven approach,'' \textit{IEEE Trans. Autom. Control}, vol.~70, no.~12, pp.~8361--8367, 2025.

\bibitem{Shinohara} T.~Shinohara, K.~H. Johansson, and H.~Sandberg, ``Data-driven
resilience assessment against sparse sensor attacks,'' \textit{arXiv preprint arXiv:2509.25064}, 2025.

\bibitem{TanakaKanekoKyb}
Y.~Tanaka and O.~Kaneko,
``Data transformation technique in the data informativity approach via algebraic sequences,''
\textit{Kybernetika}, vol.~60, no.~2, pp.~228--243, 2024.

\bibitem{Eising} J.~Eising and H.~L. Trentelman,``Informativity of noisy data for structural properties of linear systems,'' \textit{Syst. Control Lett.}, vol.~158, p.~105058, 2021.

\bibitem{EisingECC}
J.~Eising, H.~L. Trentelman, and M.~K. Camlibel,
``Data informativity for observability: An invariance-based approach,''
in \textit{Proc. European Control Conference (ECC)}, pp.~1057--1059, 2020.

\bibitem{Trentelman} H.~L. Trentelman, A.~A. Stoorvogel, and M.~Hautus, \textit{Control Theory for
Linear Systems}, London, U.K.: Springer-Verlag, 2001.

\bibitem{Willems} J.~C. Willems, P.~Rapisarda, I.~Markovsky, and B.~L. M. De Moor, ``A
note on persistency of excitation,'' \textit{Syst. Control Lett.}, vol.~54, no.~4, pp.~325--329, 2005.

\bibitem{Multiple datasets} H.~J. van Waarde, C.~De Persis, M.~K. Camlibel, and P.~Tesi, ``Willems' fundamental lemma for state-space systems and its extension to multiple datasets,'' \textit{IEEE Control Syst. Lett.}, vol.~4, no.~3, pp.~602--607, 2020.

\bibitem{DePersis PE}
C.~De Persis and P.~Tesi, ``On persistency of excitation and formulas
for data-driven control,'' in \textit{Proc. 62nd IEEE Conf.
Decis. Control}, pp.~873--878, 2019.

\bibitem{REising} R.~Eising, ``Between controllable and uncontrollable,'' \textit{Syst. Control Lett.}, vol.~4, no.~5, pp.~263–264, 1984.


\end{thebibliography}
\end{document}